\title{Permutation testing in high-dimensional linear models: an empirical investigation\footnote{Accepted for publication in \emph{Journal of Statistical Computation and Simulation}}}    
\author{Jesse Hemerik\footnote{Biometris, Wageningen University \& Research, The Netherlands}
\footnote{Address for correspondence: Jesse Hemerik, Biometris, Wageningen University \& Research, P.O. Box 16, 6700 AA Wageningen, The Netherlands. e-mail: jesse.hemerik@wur.nl},
  \phantom{.}Magne Thoresen\footnote{Oslo Centre for Biostatistics and Epidemiology, University of Oslo, Norway}
 \phantom{.}and
Livio Finos\footnote{Department of Developmental Psychology and Socialization, University of Padua, Italy}
}
\definecolor{darkblue}{rgb}{0.0, 0.0, 0.55}
\theoremstyle{plain}
\newtheorem{proposition}{Proposition}
\theoremstyle{definition}
\newcommand{\re}{\bm{R}}
\newcommand{\rer}{\tilde{\bm{R}}}
\newcommand{\pro}{\bm{H}}
\newcommand{\pror}{\tilde{\bm{H}}} 
\newcommand{\Y}{\bm{Y}}
\newcommand{\X}{\bm{X}}
\newcommand{\Z}{\bm{Z}}
\newcommand{\Pe}{\bm{P}}
\newcommand{\A}{\bm{A}}
\newcommand{\B}{\bm{B}}
\newcommand{\U}{\bm{U}}
\newcommand{\D}{\bm{D}}
\newcommand{\V}{\bm{V}}
\begin{document}
\maketitle

\begin{abstract}
\noindent Permutation testing in linear models, where the number of nuisance coefficients is smaller than the sample size, is a well-studied topic.
The common approach of such tests is to permute residuals after regressing on the nuisance covariates.
Permutation-based tests are valuable in particular  because they  can be  highly robust to violations of the standard linear model, such as non-normality and heteroscedasticity.
Moreover, in some cases they  can be combined with existing, powerful permutation-based multiple testing methods. 
Here, we propose permutation tests for models where the number of nuisance coefficients exceeds the sample size.
The performance of the novel tests is investigated with simulations.
In a wide range of simulation scenarios our proposed permutation methods provided appropriate type I error rate control, unlike some competing  tests, while having good power.
\\
\\
\emph{keywords:} Permutation test; Group invariance test; High-dimensional inference; Heteroscedasticity; Semi-parametric
\end{abstract}

\section{Introduction}

We consider the problem of testing hypotheses about coefficients  in   linear models, where the outcome may be non-Gaussian and heteroscedastic, and   the number of nuisance coefficients exceeds the sample size. 
By the nuisance coefficients we mean the coefficients that are not tested by the particular test at hand, but still need to be dealt with  since they lead to confounding effects.
In recent decades,  the literature  on permutation methods has strongly expanded \citep{tusher2001significance,  meinshausen2011asymptotic,  hemerik2018false, ganong2018permutation, berrett2018conditional,he2019permutation,albajes2019voxel, hemerik2019permutation, rao2019permutation}.
While the permutation test dates far back \citep{fisher1936coefficient}, most of the permutation tests in the presence of nuisance were published in the last four decades. To our knowledge, the existing methods are limited to low-dimensional nuisance. 
For the high-dimensional case, an approach similar to a permutation test  is proposed in \citet{dezeure2017high}.

Permutation tests for low-dimensional linear models are valuable for two main reasons. First, they are robust to  violations of certain standard assumptions, such as  normality and homoscedasticity \citep{winkler2014permutation,hemerik2020robust}.
Second, when the outcome is multidimensional, a permutation-based test can be combined with existing permutation-based multiple testing methods, which tend to be relatively powerful, since they take into account the dependence structure of the outcomes \citep{meinshausen2006false, meinshausen2011asymptotic, hemerik2018false,hemerik2019permutation}.
For example, under strong positive dependence among \emph{p}-values, the Bonferroni-Holm multiple testing method \citep{holm1979simple}  is greatly improved by a permutation method \citep{
westfall1993resampling}.

For the   low-dimensional general linear model, with identity link but not necessarily Gaussian or homoscedastic residuals, several different permutation tests have been proposed. 
The main approach that these methods  have in common,   is to permute residuals after regressing on the nuisance covariates.
 For overviews of the available methods, see \citet{anderson1999empirical}, \citet{anderson2001permutation}, \citet{winkler2016faster} and in particular \citet{winkler2014permutation}. 
Among the existing permutation methods, the Freedman-Lane approach \citep{freedman1983nonstochastic} is most commonly used and provides excellent power and type I error control.

Because  the existing permutation tests require estimating the nuisance coefficients using maximum likelihood, these methods cannot be used when the number of covariates exceeds the sample size.  
In recent years, important theoretical results have been published on testing in such 
high-dimensional linear models.  Several of these tests have proven asymptotic properties. In particular, the method in \citet{zhang2014confidence} has been shown to be asymptotically optimal under certain assumptions \citep{van2014asymptotically}.  \citet{dezeure2017high} propose a bootstrap approach, which is related to the method in  \citet{zhang2014confidence}. 
 Software implementations of tests for high-dimensional models include those described in \citet{dezeure2015high} and \citet{Chernozhukov2016}.

Testing in high-dimensional linear models is very challenging, because a large number of unknown nuisance effects needs to be dealt with, using a relatively small sample size. Consequently,  tests tend to sacrifice much power compared to the situation where all nuisance coefficients would be known. 
Further, the  asymptotic properties of the mentioned methods rely on complex assumptions and sparsity.
The test by \citet{zhang2014confidence} can  be rather anti-conservative in settings where a substantial fraction of the coefficients are non-zero. Moreover, these methods are not based on permutations. Hence they do not generally have the above-mentioned advantages, such as robustness against certain violations of the standard linear model. An exception is the  bootstrap method in \citet{dezeure2017high}, which tends to be more robust to such violations.

We propose two novel tests, which, to our knowledge, are the first permutation tests in the presence of high-dimensional nuisance.
One is an extension of the low-dimensional method in \citet{freedman1983nonstochastic} and the other is somewhat related to a method by  Kennedy \citep{kennedy1995randomization, kennedy1996randomization}. Further, we allow the tested parameter to be multi-dimensional, unlike many existing methods. 
Using simulations we show  that our methods  provide appropriate type I error rate control in a wide range of situations.   In particular, we  illustrate empirically that our tests have the above-mentioned robustness  properties. 
The methods in this paper have been implemented in the R package \emph{phd}, available on CRAN.

This paper is built up as follows. 
In Section \ref{secldn} we discuss  permutation testing in settings with low-dimensional nuisance.  This section contains some novel observations that will be used in Section \ref{sechd}.
There, we propose  permutation tests for high-dimensional settings.  We assess the performance of our methods with simulations in Section \ref{secsims}. An analysis of real data is in Section \ref{secdata}.

\section{Low-dimensional nuisance} \label{secldn}

\subsection{Notation and basic ideas} \label{secnota}
We consider the general linear model
$$\bm{Y}=\bm{X}\bm{\beta}+\bm{Z}\bm{\gamma} + \bm{\epsilon},$$
where $\bm{X}$ is a $n\times d$ matrix of covariates of interest, $\bm{Z}$ an $n\times q$  matrix of nuisance covariates and $\bm{\epsilon}$ an $n$-vector of i.i.d. errors with mean $0$ and non-zero variance, which are independent of  the covariates.   Here the rows of $\bm{X}$, $\bm{Z}$ and  $\bm{Y}$ are i.i.d..      The matrix $\bm{Z}$ is assumed to have full rank with probability $1$.   
The parameter  $\bm{\beta}\in \mathbb{R}^d$ is of interest and $\bm{\gamma} \in \mathbb{R}^q$ is a nuisance parameter.   We want to test the null hypothesis $H_0: \bm{\beta}=\bm{0}\in \mathbb{R}^d$. Here $\bm{0}$ might be replaced by another constant: the extension is straightforward.

Let $w$ be a positive integer, which will denote the number of random  permutations or other transformations. In this paper, all permutation \emph{p}-values  are of the form
\begin{equation} \label{formulap}
p=w^{-1}\big|\{1\leq j \leq w: T_j\geq T_1\}\big|,
\end{equation}
or, in case of a two-sided test where both small and large values of $T_1$ are evidence against $H_0$,
\begin{equation} \label{formulap2}
p=2 w^{-1}  \min\Big\{\big|\{1\leq j \leq w: T_j\geq T_1\}\big|,\big|\{1\leq j \leq w: T_j\leq T_1\}\big|\Big\}.
\end{equation}
 Here $T_1,...,T_w\in \mathbb{R}$ are statistics whose definition depends on the particular permutation method. They are specified in the sections below. For every $2\leq j \leq w$, the statistic $T_j$ corresponds to the $j$-th permutation. 
The statistic $T_1$ is based on the original, unpermuted data.  All existing and novel methods in this paper only differ with respect to how $T_1,...,T_w$ are computed.

Although we will often write  `permutation', sign-flipping of residuals can also be used \citep{winkler2014permutation}. 
The existing methods, as well as the novel methods in this paper, consist of the following steps.
\vskip3mm

 \begin{enumerate}
\item Compute a test statistic $T_1$ based on the original data.
\item Compute a test statistic $T_2$ in a similar way, but after randomly permuting certain residuals. Repeat to obtain $T_3,...,T_w$.
\item The \emph{p}-value equals \eqref{formulap} or  \eqref{formulap2}.
\end{enumerate}
\vskip3mm

Most of the existing permutation methods use residualization of $\bm{Y}$ or $\bm{X}$ with respect to the nuisance $\bm{Z}$. 
In the low-dimensional situation, the residual forming matrix is
$$\re= \bm{I}-\pro= \bm{I}- \Z(\Z'\Z)^{-1}\Z'.$$
When $d=1$ we will sometimes consider  $\re\bm{X}\in \mathbb{R}^n$, which is assumed to be nonzero with probability 1. In Section \ref{secldn} we assume $\Z$ contains a column of $1$'s. This  implies that the entries of $\re\X$ and $\re\Y$ sum up to 0.

Note that if we use permutation, we can write the transformed residuals  as
$\bm{P}\re\bm{Y},$
where $\bm{P}$ is an $n \times n$  matrix with exactly one $1$ in every row and column and elsewhere 0's. In case of sign-flipping, $\bm{P}$ is instead an $n \times n$ diagonal matrix with diagonal elements in $\{1,-1\}$ \citep{winkler2014permutation}.
We write $\Pe_1,...,\Pe_w$ to distinguish the $w$ random permutation matrices. Here $\Pe_1$ is the identity matrix and  $\Pe_2,...,\Pe_w$ are random.

\subsection{Choice of test statistics} \label{fl}

Here we discuss the choice of test statistics within the permutation method of Freedman and Lane  \citep{freedman1983nonstochastic,winkler2014permutation}. 
The purpose of this section is to discuss some existing and novel results that we will use in Section \ref{sechd}.

The Freedman-Lane permutation method  is known to provide excellent type I error control, with both its level and power staying very close to the parametric \emph{F}-test, under the Gaussian model.
The test statistic $T_1$ is  based on the unpermuted model $\Y=\X\bm{\beta}+\Z\bm{\gamma}+\epsilon$.
The other statistics are obtained after randomly transforming the residuals. That is,
for $2\leq j \leq w$ the  statistic $T_j$ is based on the model  $(\Pe_j\re+\pro)\Y=\X\bm{\beta}+\Z\bm{\gamma}+\epsilon$, where the same test  statistic, say $T$, is used as for computing $T_1$.
Thus\begin{equation} \label{T1FL}
T_1=T(\X,\Z,\Y),
\end{equation}
\begin{equation} \label{TjFL}
T_j=T\big(\X,\Z,(\Pe_j\re+\pro)\Y\big),
\end{equation}
 where $T$ is a suitable test statistic, the choice of which we now discuss.

It is usually  important to take $T$ to be an asymptotically pivotal statistic,  i.e., a statistic whose asymptotic null distribution does not depend on any unknowns under $H_0$ (\citeauthor{kennedy1996randomization}, \citeyear{kennedy1996randomization}, p.926-927,  \citeauthor{winkler2014permutation}, \citeyear{winkler2014permutation},  p.382, \citeauthor{hall1989effect}, \citeyear{hall1989effect}, \citeauthor{hall1991two}, \citeyear{hall1991two}). 
A pivotal statistic $T$ will always involve estimation of the nuisance parameters.  Thus, after every permutation, the nuisance parameters  need to be estimated anew. Examples of pivotal test statistics are the  \emph{F}-statistic and Wald statistic. These are equivalent: the resulting  permutation \emph{p}-value \eqref{formulap} is the same.

In case $X$ is one-dimensional, the \emph{F}-statistic is  also equivalent to the square of the \emph{partial correlation} \citep{fisher1924distribution, agresti2015foundations}, which is used in  \citet{anderson2001permutation}. The partial correlation is the sample Pearson correlation of  $\re \Y$ and $\re \X$,
\begin{equation} \label{parcor}
\rho\big( \re \Y,\re\X  \big)=    \frac{  (\re\Y)'\re\X  }{ \sqrt{ \sum_i (\re\Y)_i^2  \sum_i (\re\X)_i^2 } }.
\end{equation}
Here we used that the sample means of $\re\Y$ and $\re\X$ are $0$.
If we use the partial correlation in the Freedman-Lane permutation test, this means that we take 
$T(\X,\Z,\Y)= \rho\big( \re \Y,\re\X  \big),$
so that  \eqref{T1FL} and \eqref{TjFL} become
\begin{equation} \label{T1OLS}
T_1= \rho\big( \re \Y,\re\X  \big)
\end{equation}
\begin{equation}  \label{TjOLS}
 \quad T_j=  \rho\big( \re (\Pe_j\re+\pro)\Y     ,\re\X  \big),
\end{equation}
where $\re (\Pe_j\re+\pro)$ could be simplified to $\re \Pe_j\re$, since $\re\pro=\bm{0}$.

The numerator in \eqref{parcor} is 
$$(\re\Y)'\re\X=\Y'\re'\re\X=\Y'\re'\X=(\re\Y)'\X,$$ 
so that \eqref{parcor} equals
\begin{equation} \label{semipc2}
  \frac{  (\re\Y)'\X  }{  \sqrt{ \sum_i (\re\Y)_i^2  \sum_i (\re\X)_i^2 }}. 
\end{equation}
The Freedman-Lane  test with $T$ defined by \eqref{semipc2} remains unchanged if in \eqref{semipc2} we replace  $ \sum_i (\re\X)_i^2$ by $1$ or by the constant  $\sum_i \X_i^2$. Indeed, $T_1,...,T_w$ will just be multiplied by the same constant.
Thus, with respect to the permutation test, the statistic \eqref{parcor} is equivalent to 
\begin{equation} \label{semiparcor}
 \frac{  (\re\Y)'\X  }{ \sqrt{ \sum_i (\re\Y)_i^2  \sum_i \X_i^2  }}.
\end{equation}
If $\X$ has been centered around $0$, then this equals 
\begin{equation}    \label{T1OLSsemi}
\rho\big( \re \Y,\X  \big) = \frac{  (\re\Y)'(\X-\bm{\mu}_x)  }{ \sqrt{ \sum_i (\re\Y)_i^2  \sum_i (\X_i-\bm{\mu}_x)^2  }},
\end{equation}
 where $\bm{\mu}_x$ denotes the $n$-vector with entries equal to the   sample mean of $\X$.
This is  the sample correlation of $\re\Y$ and $\X$ and  is called the \emph{semi-partial correlation}.
 Thus, if $\X$ is centered, using the partial correlation is equivalent to using the semi-partial correlation. 

If we take $T$ to be the semi-partial correlation, then \eqref{T1FL} and \eqref{TjFL} become
$T_1= \rho\big( \re \Y,\X  \big)$ and 
\begin{equation}  \label{TjOLSsemi}
 \quad T_j=  \rho\big( \re (\Pe_j\re+\pro)\Y     ,\X  \big)=  
\frac{  \big(\re (\Pe_j\re+\pro)\Y \big)'(\X-\bm{\mu}_x)  }{ \sqrt{ \sum_i \big( \re (\Pe_j\re+\pro)\Y\big)_i^2  \sum_i (\X_i-\bm{\mu}_x)^2  }},
\end{equation} where $\re (\Pe_j\re+\pro)$ could be simplified to $\re \Pe_j\re$.
Note that we could simply  leave the constant $\sum_i (\X_i-\bm{\mu}_x)^2$ out without changing the result of the permutation test.
Although for centered $\X$ the  statistics \eqref{parcor} and \eqref{T1OLSsemi}   are  equivalent, their counterparts in the high-dimensional setting are not, as will be discussed in Section \ref{secflhd}.

\section{High-dimensional nuisance} \label{sechd}  

When the nuisance parameter $\bm{\gamma}$ has dimension $q\geq n$, the existing permutation methods cannot be used. Here,  these approaches are adapted to obtain tests which can account for high-dimensional nuisance. We first consider the case that $X$ is one-dimensional, i.e., $d=1$.   The case that $d>1$ is discussed in Section \ref{multidimbeta}. 
We assume that the entries of $\Y$, $\X$ and $\Z$ have expected value $0$. Consequently, the intercept is $0$.

All existing tests rely on residualization steps, where $\Y$ or $\X$  is regressed on $\Z$. 
A natural way to adapt this step to the high-dimensional setting,  is to instead estimate the residuals using some  type of elastic net regularization.   
We will consider ridge regression. For minimizing prediction error, ridge regression is often preferrable to Lasso, principal components regression, variable subset selection and partial least squares \citep{hastie2009elements,frank1993statistical}.

Compared to the existing methods, including the Freedman-Lane approach discussed  in Section \ref{fl}, using ridge regression comes down to replacing the  projections $\hat{\Y}=\pro \Y$ and $\hat{\X}=\pro \X$ by ridge estimates
$ \pror_{\lambda}\Y$
and   $   \pror_{\lambda_X}\X   $, with $\lambda, \lambda_X>0$. 
Here, for $\lambda'>0$,
\begin{equation} \label{prorl}
  \pror_{\lambda'}=\Z(\Z'\Z+\lambda' \bm{I}_q)^{-1}\Z',
\end{equation}
which satisfies 
$$\pror_{\lambda'}\Y= \Z  \text{argmin}_{\bm{\gamma}}\Big(\Vert  \Y-\Z\bm{\gamma}  \Vert_2^2 +\lambda' \Vert \bm{\gamma} \Vert_2^2     \Big)$$ and similarly for $\X$.
The values $\lambda, \lambda_X$ are the regularization parameters, whose selection  will be discussed. 
Using ridge regression, the residuals become  $\rer_{\lambda} \Y$ and $\rer_{\lambda_X} \X$, where $\rer_{\lambda}=(\bm{I}-\pror_{\lambda})$ and $\rer_{\lambda_X} =(\bm{I}-\pror_{\lambda_X})$.

The last two rows of Table \ref{toverviewhd} outline the permutation schemes that we will consider in Sections \ref{secflhd} and \ref{secdres}. The first two rows summarize the Freedman-Lane method discussed in Section \ref{fl} and the Kennedy method \citep{kennedy1995randomization, kennedy1996randomization,winkler2014permutation}.
This table is analogous to Table 2 in \citet{winkler2014permutation} and allows easy comparison of the new methods with the existing methods discussed in  \citet{winkler2014permutation}.

Although Table \ref{toverviewhd} outlines the permutation schemes that we will use, several crucial specifics remain to be filled in.  For example, several choices of the regularization parameters $\lambda$ and $\lambda_X$  can be considered. Moreover, the computational challenge of performing nuisance estimation in every step needs to be addressed. Finally and importantly, we must determine what test statistics are suitable to  use within our permutation tests.

\begin{table}[h!] 
  \begin{center}
    \caption{Permutation schemes for four different methods. The last two methods are novel and can account for high-dimensional nuisance.}
      \label{toverviewhd}
    \begin{tabular}{ll}        
      \hline
      \textbf{Method} & \qquad\textbf{Model after permutation} \\
      \hline
      Freedman-Lane & \qquad $(\Pe\re+\pro)\Y=\X\bm{\beta}+\Z\bm{\gamma}+\bm{\epsilon}$\\
      Kennedy & \qquad  $\Pe\re\Y=\re\X\bm{\beta}+\bm{\epsilon}$\\
      Freedman-Lane HD & \qquad $(\Pe\rer_{\lambda}+\pror_{\lambda})\Y=\X\bm{\beta}+\Z\bm{\gamma}+\bm{\epsilon}$\\
      Double Residualization & \qquad  $(\Pe\rer_{\lambda}+\pror_{\lambda}) \Y=\rer_{\lambda_X}\X\bm{\beta}+\bm{\epsilon}$\\
      \hline
    \end{tabular}
  \end{center}
\end{table}

\subsection{Freedman-Lane HD}  \label{secflhd}
As discussed in Section \ref{fl}, the  low-dimensional Freedman-Lane method is known to provide excellent  type I error control and power. Here we will provide an extension to the case of high-dimensional nuisance. We will refer to this test as \emph{Freedman-Lane HD}. The permutation scheme that we  use is analogous to that of Freedman-Lane and  is shown in the third row of Table \ref{toverviewhd}. 

As in the Freedman-Lane method, after every permutation, we will require nuisance estimation to compute $T_j$.
We will choose ridge regression to do this.
 Note however that when many permutations are used, performing a ridge regression after every permutation can be a large computational burden. We will therefore compute   $\lambda$ only once, for the unpermuted model. 
We take $\lambda$ to be the value that gives the minimal mean cross-validated error; see Section \ref{secsimset} for more details.
 After each permutation, we then use the same parameter $\lambda$ in the ridge regression. Thus, after the $j$-th permutation, to compute the new ridge residuals, we will only need to pre-multiply the transformed outcome $(\Pe_j\rer_{\lambda}+\pror_{\lambda})\Y$ by  $\rer_{\lambda}$. We only need to compute $\rer_{\lambda}$ once.
Owing to this approach, essentially we  need to perform ridge regression only once.

An important consideration is the test statistic $T$ used within the permutation test. The usual \emph{F}-statistic and Wald statistic are only defined when the nuisance is low-dimensional. Extending these definitions to the high-dimensional setting with $q\geq n$  is problematic. For example, a Wald-type statistic would require an unbiased  estimate of $\beta$ and a variance estimate. 
The partial correlation \eqref{parcor}, however, is more naturally generalized to the $q\geq n$ setting: we can replace the residuals $\re\Y$ and   $\re\X$ by the ridge residuals 
$\rer_{\lambda}\Y$ and   $\rer_{\lambda_X}\X$. Similarly we can generalize the semi-partial correlation \eqref{T1OLSsemi}, by replacing $\re\Y$ by $\rer_{\lambda}\Y$.
This gives the following test statistics, which generalize the partial correlation \eqref{parcor} and the semi-partial correlation \eqref{T1OLSsemi} respectively:
\begin{equation} \label{parcorHD}
\rho\big( \rer_{\lambda} \Y,\rer_{\lambda_X}\X \big)=    \frac{  (\rer_{\lambda}\Y-\bm{\mu}_1)'   (\rer_{\lambda_X} \X-\bm{\mu}_2 ) }{  \sqrt{\sum_i (\rer_{\lambda}\Y-\bm{\mu}_1)_i^2  \sum_i (\rer_{\lambda_X}\X-\bm{\mu}_2)_i^2 } },
\end{equation}
\begin{equation} \label{semiparcorHD}
\rho\big( \rer_{\lambda} \Y,\X  \big)=    \frac{  (\rer_{\lambda}\Y-\bm{\mu}_{1})'(\X-\bm{\mu}_{x})  }{ \sqrt{ \sum_i (\rer_{\lambda}\Y-\bm{\mu}_{1})_i^2  \sum_i (\X-\bm{\mu}_x)_i^2  }}.
\end{equation}
Here,  $\bm{\mu}_1$, $\bm{\mu}_2$ and $\bm{\mu}_x$ are  $n$-vectors whose entries are  the sample means of $\rer_{\lambda} \Y$,  $\rer_{\lambda_X}\X$ and $\X$ respectively.  \citet{zhu2018significance} also use a type of generalized partial correlation as the test statistic.

In Section \ref{fl} we reasoned that if $\X$ has been centered, \eqref{parcor} and \eqref{T1OLSsemi} are equivalent with respect to the permutation test. This does not apply to  \eqref{parcorHD} and \eqref{semiparcorHD}. In simulations, using the statistic \eqref{semiparcorHD} tended to result in somewhat higher power than using the statistic  \eqref{parcorHD}. In  Section \ref{secsims} we consider both methods.

In case the generalization of the partial correlation is used, the test statistics $T_1,...,T_w$ on which  Freedman-Lane HD is based are
\begin{equation} \label{eq:flT1par}
T_1 = \rho\big( \rer_{\lambda} \Y,\rer_{\lambda_X} \X  \big),
\end{equation}
\begin{equation} \label{eq:flTjpar}
T_j = \rho\big( \rer_{\lambda} \big(\Pe_j\rer_{\lambda}+\pror_{\lambda})\Y,\rer_{\lambda_X}\X  \big) =
\end{equation}
$$\frac{  \big(\rer_{\lambda} (\Pe_j\rer_{\lambda}+\pror_{\lambda})\Y-\bm{\mu}^j\big)'   (\rer_{\lambda_X} \X-\bm{\mu}_2 ) }{  \sqrt{\sum_i \big(\rer_{\lambda} (\Pe_j\rer_{\lambda}+\pror_{\lambda})\Y-\bm{\mu}^j\big)_i^2  \sum_i (\rer_{\lambda_X}\X-\bm{\mu}_2)_i^2 } }, $$
where $2\leq j \leq w$. Here $\bm{\mu}^j$ is an $n$-vector whose entries are the sample mean of   $\rer_{\lambda}(\Pe_j\rer_{\lambda}+\pror_{\lambda})\Y$.
For the version based on the generalization of the semi-partial correlation, the statistics are
\begin{equation} \label{eq:flT1}
T_1 = \rho\big( \rer_{\lambda} \Y,\X  \big),
\end{equation}
\begin{equation} \label{eq:flTj}
T_j = \rho\big( \rer_{\lambda} (\Pe_j\rer_{\lambda}+\pror_{\lambda})\Y,\X  \big).
\end{equation}
As usual, $T_1$ is just  $T_j$ with  $\Pe_j=\bm{I}_n$.
The pseudo-code for the version based on semi-partial correlations is in Algorithm \ref{a:FLHD}. 

If $q<n$, as $\lambda\downarrow 0$, the test converges to the test for $\lambda=0$, which is the classical Freedman-Lane method.
In the wide range of simulation settings considered in Section \ref{secsims}, the Freedman-Lane HD method stayed on the conservative side, in the sense that the size was less than $\alpha$. This may due to the fact that  if $\lambda>0$ and $2\leq j<k\leq w$,
 the correlation between $T_1$ and $T_j$ tended to be larger than the correlation between $T_j$ and $T_k$ in simulations. This may be related to the fact that 
 the correlation between $\Y$ and $\Y^{*j}$ is strictly larger than the correlation between $\Y^{*j}$ and $\Y^{*k}$, where $\Y^{*j} :=  (\Pe_j\rer_{\lambda}+\pror_{\lambda})\Y$. This inequality is proved in the Supplementary Material.  

As discussed, to perform the test, $\lambda$ and hence $\rer_{\lambda}$ need to be computed only once.
Thus, like the low-dimensional Freedman-Lane procedure, the test requires nuisance estimation after every permutation, but this is not a large computational burden.
The method is often computationally feasible even when many millions of permutations are used; see Section \ref{secsims}. It is also worth mentioning that there exist approximate methods for reducing the number of permutations while still allowing for very small, accurate \emph{p}-values \citep{knijnenburg2009fewer,winkler2016faster}.

\begin{algorithm}[h!] 
\caption{Freedman-Lane HD (version based on semi-partial correlations)}
\begin{algorithmic}[1]  \label{a:FLHD}
\STATE   Compute $\pror_{\lambda}=  \Z(\Z'\Z+\lambda \bm{I}_q)^{-1}\Z'$ and the residual forming matrix $\rer_{\lambda}=\bm{I}-\pror_{\lambda}$. Here $\lambda$ is taken to give the minimal mean cross-validated error (see main text).
\STATE Let $T_1=\rho\big( \rer_{\lambda} \Y,\X  \big)$, the sample Pearson correlation of the $\Y$-residuals with $\X$.
\FOR{$2\leq j \leq w$} 
\STATE Let $T_j= \rho\big( \rer_{\lambda} (\Pe_j\rer_{\lambda}+\pror_{\lambda})\Y,\X  \big)$, where the random matrix $\Pe_j$ encodes  random permutation or sign-flipping. 
\ENDFOR
\STATE The two-sided \emph{p}-value $p$ equals  \eqref{formulap2}.
\RETURN $p$
\end{algorithmic}
\end{algorithm}

\subsection{Double residualization} \label{secdres}
Here we propose a test that we refer to as the \emph{Double Residualization} method. The method is somewhat related to the Kennedy procedure \citep{kennedy1995randomization, kennedy1996randomization,winkler2014permutation}, but not analogous.
The  Kennedy method residualizes both $\Y$ and $\X$  and proceeds to permute the $\Y$-residuals. Here we replace the least squares regression by ridge regression. Moreover, unlike Kennedy's permutation scheme,  we keep $\pror_{\lambda}\Y$ in the model; see Table \ref{toverviewhd}.
The test statistic that we use within the permutation test is the sample correlation. Thus, the test is based on the statistics
$$T_1= \rho \big ( \Y ,\rer_{\lambda_X}\X \big ),$$
\begin{equation} \label{TjDR}
T_j=\rho\big ( (\Pe_j\rer_{\lambda}+\pror_{\lambda})\Y ,\rer_{\lambda_X}\X \big ),
\end{equation}
where $2\leq j \leq w$.
The difference between  \eqref{TjDR} and \eqref{eq:flTjpar} is that \eqref{eq:flTjpar} contains an additional $\rer_{\lambda}$.
The pseudo-code for the Double Residualization method is in  Algorithm \ref{a:DR}.
We take  $\lambda$ and $\lambda_X$ to be the  values that give the minimal mean cross-validated error; see Section \ref{secsimset} for more details.
For fixed $q$, as $n\rightarrow\infty$, the Double Residualization method becomes equivalent  to the Kennedy method and the Freedman-Lane method if the penalty is $o_{\mathbb{P}}(n^{1/2})$, as shown in the Supplementary Material. 
The case that $q>n$ is investigated in Section \ref{secsims}.

\begin{algorithm}[h!] 
\caption{Double Residualization}
\begin{algorithmic}[1]  \label{a:DR}
\STATE Compute $\pror_{\lambda}=  \Z(\Z'\Z+\lambda \bm{I}_q)^{-1}\Z'$ and, analogously, $\pror_{\lambda_X}$. Here $\lambda$ and $\lambda_X$ are determined through cross-validation (see main text).
Let  $\rer_{\lambda}=\bm{I}-\pror_{\lambda}$ and $\rer_{\lambda_X}=\bm{I}-\pror_{\lambda_X}$.
\STATE  Let $T_1= \rho \big ( \Y ,\rer_{\lambda_X}\X \big )$, the sample Pearson correlation of $\Y$ and $\rer_{\lambda_X}\X$.
\FOR{$2\leq j \leq w$} 
\STATE Let $T_j=\rho\big ( (\Pe_j\rer_{\lambda}+\pror_{\lambda})\Y ,\rer_{\lambda_X}\X \big )$, where the random matrix $\Pe_j$ encodes  random permutation or sign-flipping. 
\ENDFOR
\STATE The two-sided \emph{p}-value $p$ equals  \eqref{formulap2}.
\RETURN $p$
\end{algorithmic}
\end{algorithm}

\subsection{Multi-dimensional parameter of interest} \label{multidimbeta}
In the above we considered the case that the tested parameter $\beta$ has dimension $d=1$. Our tests can be extended to the case $d>1$ by using Pesarin's Non-Parametric Combination (NPC) approach \citep[][ch. 4]{pesarin2010permutation}. This is a general method for combining permutation tests of different hypotheses into a test for the intersection hypothesis. 
The NPC principle can be applied in a wide range of scenarios.
In simpler settings with no nuisance, NPC has important proven properties, such as asymptotically optimal power. Here, we will explain how NPC  can be applied in our setting. For convenience, we will focus on the application of NPC to our test of Algorithm \ref{a:FLHD}, i.e., Freedman-Lane HD based on the generalized semi-partial correlation. Combining NPC with our other tests can be done similarly, 
but can be computationally much less efficient for large $d$, as will be explained below. 

Suppose $d>1$. We are interested in $H_0: \bm{\beta}=\bm{\beta}_0$, where we assume $\bm{\beta}_0=\bm{0}$ again for notational convenience. For every $1\leq l \leq d$, let $\beta_l$ be the $l$-th entry of $\bm{\beta}$. The hypothesis of interest $H_0$ is the intersection of $H^1,...,H^d$, where $H^l$ is the hypothesis that $\beta_l$ equals $0$.
To test $H_0=H^1\cap...\cap H^d$, we proceed as follows.
As usual, sample random matrices $\bm{P}_1,...,\bm{P}_w$ that encode permutation (or sign-flipping).
For every $1\leq l \leq d$ and $1\leq j \leq w$, define
 $$T_j^l = \rho\big( \rer_{\lambda} (\Pe_j\rer_{\lambda}+\pror_{\lambda})\Y,\X_{\cdot l}  \big),$$
 where $\X_{\cdot l}$ is the $l$-th column of $\X$.
 A key point here is that the same permutation matrix $\Pe_j$ is used to compute each of the statistics $T_j^1,....,T_j^d$. 
 Due to this manner of simultaneous permutation, the dependence structure of $(T_j^1,....,T_j^d)$ mimics that of $(T_1^1,....,T_1^d)$.
 Indeed, if $\bm{\gamma}$ were exactly known so that we could replace $\rer_{\lambda}\Y$ and $\pror_{\lambda}\Y$ by  
 $\bm{\epsilon}$ and  $\Z\bm{\gamma}$, then 
 $(T_j^1,....,T_j^d)$ and $(T_1^1,....,T_1^d)$ would have exactly the same dependence structure under $H_0$.

Consider a function $\Psi:\mathbb{R}^d\rightarrow \mathbb{R}$, which will be used to compute a combination statistic \citep[][ch. 4]{pesarin2010permutation}. For every $1\leq j \leq w$ define $\Psi_j= \Psi(T_j^1,....,T_j^d)$. Note that if  $\rer_{\lambda}\Y$ and $\pror_{\lambda}\Y$ would be the exact errors and expected values, then under $H_0$, $\Psi_1,...,\Psi_w$ would be identically distributed and exchangeable. 
The  \emph{p}-value for testing $H_0$ is now computed as in \eqref{formulap}  but with $T_j$ replaced by the combination statistic $\Psi_j$. 
The pseudo-code for this test is in Algorithm \ref{a:FLHDmulti}. Note that if $d=1$ and $\Psi$ is the identity and a two-sided \emph{p}-value is computed, then this method reduces to the test of Algorithm \ref{a:FLHD}.

\begin{algorithm}[h!] 
\caption{ Extension of the test of Algorithm \ref{a:FLHD} to the case that $d>1$.}
\begin{algorithmic}[1]  \label{a:FLHDmulti}
\STATE Compute $\pror_{\lambda}=  \Z(\Z'\Z+\lambda \bm{I}_q)^{-1}\Z'$ and the residual forming matrix $\rer_{\lambda}=\bm{I}-\pror_{\lambda}$. Here $\lambda$ is taken to give the minimal mean cross-validated error.
\FOR{$1\leq l \leq d$}
\STATE Let $T_1^l =\rho\big( \rer_{\lambda} \Y,\X_{\cdot l}  \big)$, where $\X_{\cdot l}$ is the $l$-th column of $\X$.
\ENDFOR
\FOR{$2\leq j \leq w$}
\STATE Consider a random $n\times n$ matrix $\Pe_j$ encoding random permutation or sign-flipping.
\STATE Compute $\rer_{\lambda} (\Pe_j\rer_{\lambda}+\pror_{\lambda})\Y$.
\FOR{$1\leq l \leq d$} 
\STATE Let $T_j^l= \rho\big( \rer_{\lambda} (\Pe_j\rer_{\lambda}+\pror_{\lambda})\Y,\X_{\cdot l}  \big)$. 
\ENDFOR
\ENDFOR
\FOR{$1\leq j \leq w$}
\STATE Compute $\Psi_j= \Psi(T_j^1,....,T_j^d)$, where $\Psi$ is the combining function.
\ENDFOR
\STATE The \emph{p}-value \emph{p} equals $w^{-1}\big|\{1\leq j \leq w: \Psi_j\geq \Psi_1\}\big|$.
\RETURN $p$
\end{algorithmic}
\end{algorithm}

The function $\Psi$ should be chosen such that high values of $\Psi_1$ indicate evidence against $H_0$. The choice of $\Psi$ influences power. Examples of functions $\Psi$ are 
$\Psi(t_1,...,t_d)=\max(|t_1|,...,|t_d|)$ and $\Psi(t_1,...,t_d)=d^{-1}\sum_{l=1}^d |t_l|$.  The former choice of $\Psi$ if often used when one or few of the coefficients $\beta_1,...,\beta_d$ are expected to be nonzero under the alternative. Otherwise, the latter choice of $\Psi$ is often used. Other examples of combining functions $\Psi$ are in \citet[][ch. 4]{pesarin2010permutation}.

Applying NPC to the other tests of Sections \ref{secflhd} and \ref{secdres} tends to be computationally less efficient than the method of Algorithm  \ref{a:FLHDmulti}.
For example, applying NPC to our Double Residualization method would require ridge-regressing each of the $d$ variables of interest (corresponding to $\beta_1,...,\beta_d$) on the nuisance variables. 

\section{Simulations} \label{secsims}
We used simulations to gain additional insight into the performance of the new tests, as well as existing tests. The simulations were performed with \emph{R} version 3.6.0 on a server with 40 cores and 1TB RAM. In Section \ref{secsimgaus} we consider scenarios where the outcome $Y$ follows a standard Gaussian high-dimensional linear model. In Section \ref{secsimrob} we consider non-standard settings with  non-normality and  heteroscedasticity.  We consider simulated  datasets where the covariates  have equal variances. It is well-known that when the data are not standardized, this can affect the accuracy of the model obtained with ridge regression  \citep[][p.257]{buhlmann2014high}.

\subsection{Simulation settings and tests}  \label{secsimset}

 We considered the model in Section \ref{secnota}, where the variable of interest  was one-dimensional, i.e., $\beta\in\mathbb{R}$. 
 The case $d>1$ is considered in Section \ref{secsimmulti}. 
 In every simulation, the covariates had mean $0$ and variance $1$.
They were sampled  from a multivariate normal distribution with homogenous correlation $\rho'$, unless stated otherwise.  The errors $\bm{\epsilon}$ had variance 1, unless stated otherwise.  The intercept was $\gamma_1=0$, i.e., $Y$ had mean $0$. The tested hypothesis was $H_0:\beta=0$. 
The sample size in the reported simulations was $n=30$, unless stated otherwise. We obtained comparable results for other sample sizes. 
The estimated probabilities in the tables are based on $10^4$ repeated simulations, unless stated otherwise.

In the power simulations we usually took $|\beta|$ to be relatively large compared to most of the nuisance coefficients. The reason is that testing in high-dimensional models is very challenging. For example, in settings with $|\beta|=|\gamma_1|=...=|\gamma_q| > 0$ the power of all the tests considered (including the competitors) usually  barely exceeds the type I error rate.

The penalty $\lambda$ was chosen to give the minimal mean error, based on 10-fold cross validation. The penalty $\lambda_X$ was chosen analogously. 
To compute the penalties, we used the \emph{cv.glmnet()} function in the R package \emph{glmnet}.
We used $[10^{-5} ,10^{5}]$ as the range of candidate values for the penalty. The penalty obtained with \emph{cv.glmnet()} is scaled by a factor $n$, so we multiplied this penalty by $n$ to  obtain $\lambda$. We included an intercept in the ridge regressions, but excluding the intercept gave very similar results.

All tests used were two-sided. The tests corresponding to the columns of the tables in this section are the following. 

``FLH1" is the Freedman-Lane HD test defined in Section \ref{secflhd}, with test statistics $T_1,...,T_w$ based on the generalized partial correlation  as in \eqref{eq:flTjpar}.
``FLH2" is the same, except that  $T_1,..,T_w$  are based on the generalized \emph{semi}-partial correlation  as in \eqref{eq:flTj}.
``DR"  is the Double Residualization method of Section \ref{secdres}. Each of these tests used $w=2 \cdot 10^4$ permutations.

``BM" is a  high-dimensional test based on ridge projections, proposed in \citet{buhlmann2013statistical}.  This test is based on a bias-corrected estimate $|\hat{\beta}_{\text{corr}}|$ of $|\beta|\in \mathbb{R}$ and an asymptotic upper bound of its distribution. 
We used the implementation in the  R package \emph{hdi} \citep{dezeure2015high}.

``ZZ" is a  high-dimensional test based on Lasso projections, proposed in   \citet{zhang2014confidence}. This method constructs a different bias-corrected estimate $\hat{b}$ of $\beta$, which has an asymptotically known  normal distribution under certain assumptions, such as sparsity. 
For this test we also used  the \emph{hdi} package. We could not include this test in the simulations with a very high number of nuisance parameters, since it is computationally very time-consuming when $q$ is large, as also noted in \citet{dezeure2015high}. We expect the test to have good power in these settings.   

``BO" is the bootstrap approach in \citet{dezeure2017high}, which is also implemented in the \emph{hdi} package. We set the number of bootstrap samples per test to 1000 and considered the robust version of the method. We used the shortcut, which avoids repeated tuning of the penalty.  Still, the method was very slow, so that we used $10^3$  instead of $10^4$ repeated simulations of this method per setting. Also, we did not include the test in the simulations with very large $q$.

\subsection{Gaussian, homoscedastic outcome} \label{secsimgaus}

We first consider some settings with  a moderately large number of nuisance coefficients, $q=60$.
We first simulated a setting with $\gamma_2=...=\gamma_{60}=0.05$, i.e, $\bm{\gamma}$ was dense. 
 We took $\rho'=0.5$. The estimated level and power of the tests described above, for different \emph{p}-value cut-offs $\alpha$,  are shown in Table \ref{table:hdaspr05}.
The tests rejected $H_0$ if the \emph{p}-value was smaller than $\alpha$. The level of  a test should be at most $\alpha$. 

Table \ref{table:hdaspr05} shows that the test ZZ by \citet{zhang2014confidence} was rather anti-conservative. Especially for small $\alpha$, its level was many times larger than $\alpha$. This is partly due to the anti-sparsity. Indeed, ZZ only has proven asymptotic  properties under a sparsity assumption.
The bootstrap approach BO of \citet{dezeure2017high} was much less liberal, but still seemed to be somewhat anti-conservative for small $\alpha$.
Of the other tests, Freedman-Lane HD 2 (FLH2) often had the most power. The Double Residualization method had relatively low power when $\alpha$ was small, e.g. $0.001$.

\begin{table}[!h] \normalsize  
\begin{center}
\caption{Dense setting with $\rho'=0.5$, $n=30$, $q=60$. Power is shown for $\beta= 1.5$. } 
    \begin{tabular}{ l l   l  l l l  l l }    
\hline \\[-0.4cm]
 &  & \multicolumn{5}{l}{\qquad \qquad \qquad \qquad  Method} \\ \cline{3-8} 
 &  $\alpha$ &  FLH1 & FLH2  & DR  & BM  & ZZ & BO  \\ \hline 
  &  0.05   \qquad \quad  & .0281 & .0333 & .0219 & .0087 & .0666 & .063   \\ 
  level  &  0.01  \qquad \quad  & .0042 & .0063 & .0021 & .0024 & .0311 & .023   \\ 
  &  0.001    \qquad \quad  & .0003 & .0006 & 0001 & .0005 & .0121  & .009 \\  \hline   
   &  0.05   \qquad \quad  & .9062 & .9273 & .9616 & .8901 & .9934 & .982 \\ 
 power  &  0.01  \qquad \quad  & .8373 & .8819 & .7984 & .7679& .9799 & .939 \\ 
  &  0.001    \qquad \quad   & .6716 & .7996 & .3263 & .5795  & .9441 & .857 \\  \hline  
    \end{tabular}
\label{table:hdaspr05}
\end{center}
\end{table}

We also considered a setting with very high correlation $\rho'=0.9$, see Table \ref{table:hdspr09}. We took $\gamma_2=\gamma_3=1$ and $\gamma_4=....=\gamma_{60}=0$. The first 4 methods provided appropriate type I error control.  
For small cut-offs $\alpha$, the method ZZ by \citet{zhang2014confidence} was relatively powerful, but also seemed to be somewhat anti-conservative. 
This method seems more suitable for settings where $q$ is many times larger than $n$. 
Among our permutation methods, Freedman-Lane HD 2 had the best power, while incurring few type I errors.
The method BM by  \citet{buhlmann2013statistical} was relatively conservative.

We repeated the same simulation scenario, but with $n=15$ instead of $n=30$. The results are in Table \ref{table:hdspr09n15}. The methods ZZ of \citet{zhang2014confidence} and BO of \citet{dezeure2017high} were very anti-conservative for $\alpha=0.01$ and $\alpha=0.001$. Our methods provided appropriate type I error control.

\begin{table}[!h] \normalsize  
\begin{center}
\caption{ Sparse setting with $\rho'=0.9$, $n=30$, $q=60$. Power is shown for $\beta= 1.5$. } 
    \begin{tabular}{ l l   l  l l l  l l }    
\hline \\[-0.4cm]
 &  & \multicolumn{5}{l}{\qquad \qquad \qquad \qquad  Method} \\ \cline{3-8} 
 &  $\alpha$ &  FLH1 & FLH2  & DR  & BM  & ZZ & BO  \\ \hline 
  &  0.05   \qquad \quad & .0302 & .0270 & .0348 & .0106  & .0358  & .051 \\ 
  level  &  0.01  \qquad \quad & .0050 & .0035 & .0044 & .0013 & .0104 & .012 \\ 
  &  0.001    \qquad \quad & .0003 & .0001 & .0001 & .0000 & .0022  & .002 \\  \hline   
   &  0.05   \qquad \quad & .4494 & .5426 & .4804 & .3234  & .6050 & .554  \\ 
 power  &  0.01  \qquad \quad & .2283 & .3379 & .2135 & .1506 & .4154  & .346 \\ 
  &  0.001    \qquad \quad & .0685 & .1195 & .0445 & .0501 & .2296  & .206  \\  \hline  
    \end{tabular}
\label{table:hdspr09}
\end{center}
\end{table}

\begin{table}[!h] \normalsize  
\begin{center}
\caption{Sparse setting with $\rho'=0.9$, $n=15$, $q=60$. Power is shown for $\beta= 3$. } 
    \begin{tabular}{ l l   l  l l l  l l }    
\hline \\[-0.4cm]
 &  & \multicolumn{5}{l}{\qquad \qquad \qquad \qquad  Method} \\ \cline{3-8} 
 &  $\alpha$ &  FLH1 & FLH2  & DR  & BM  & ZZ & BO  \\ \hline 
  &  0.05   \qquad \quad & .0268 & .0244 & .0294 & .0030  & .0392  & .050 \\ 
  level  &  0.01  \qquad \quad & .0048 & .0030 &  .0028 & .0004  & .0124  & .026\\ 
  &  0.001    \qquad \quad & .0008 & .0000 & .0000 & .0002  & .0032  & .020 \\  \hline   
   &  0.05   \qquad \quad & .5020 & .6034  & .5090 & .4038 & .7586 & .692 \\ 
 power  &  0.01  \qquad \quad & .2822 & .4558&  .2094 & .2384 & .6248  & .552 \\  
  &  0.001    \qquad \quad & .0730 & .1982 & .0438  & .1244 & .4614  & .386 \\   \hline  
    \end{tabular}
\label{table:hdspr09n15}
\end{center}
\end{table}

Further, we considered a simulation where there were clusters of correlated covariates. The setting was as before, except that there were three independent clusters of size 20. Each cluster had a multivariate normal distribution with all correlations equal to $0.9$. We took $\gamma_2=...=\gamma_{60}=0.05$. The results are in Table \ref{table:clusters}. As before, the tests ZZ of \citet{zhang2014confidence} and BO of \citet{dezeure2017high} had good power, but were anti-conservative.

\begin{table}[!h] \normalsize  
\begin{center}
\caption{ Dense setting with $n=30$, $q=60$ and three clusters of dependent covariates. Power is shown for $\beta= 1.5$. } 
    \begin{tabular}{ l l   l  l l l  l l }    
\hline \\[-0.4cm]
 &  & \multicolumn{5}{l}{\qquad \qquad \qquad \qquad  Method} \\ \cline{3-8} 
 &  $\alpha$ &  FLH1 & FLH2  & DR  & BM  & ZZ & BO  \\ \hline 
  &  0.05   \qquad \quad & .0356& .0224& .0344 & .0130  & .0520 & .073 \\ 
  level  &  0.01  \qquad \quad & .0059 & .0025& .0048 & .0022  & .0248 & .023  \\ 
  &  0.001    \qquad \quad & .0010 & .0002& .0002 &.0007  & .0087 & .008  \\  \hline   
   &  0.05   \qquad \quad & .4892 & .5706  & .5043 & .4188  & 7382 & .620   \\ 
 power  &  0.01  \qquad \quad & .2672 & .3393& .2226 & .2399  & .6199 & .454  \\ 
  &  0.001    \qquad \quad & .0814 & .1007 & .0382 & .0977  & .4741 & .322   \\  \hline  
    \end{tabular}
\label{table:clusters}
\end{center}
\end{table}

We also performed simulations with a very large number of nuisance variables ($q=1000$). We first took $\gamma_2=\gamma_3=1$, $\gamma_4=...=\gamma_{10}=0.2$, $\gamma_{11}=...=\gamma_{1000}=0.$ See Table \ref{table:q1000rho05} for simulations with $\rho'=0.5$ and Table  \ref{table:q1000rho09} for simulations with $\rho'=0.9$. All permutation methods provided appropriate type I error control. 
Double Residualization (DR) had relatively high power for large cut-offs $\alpha$,  but not for small cut-offs. The method BM by  \citet{buhlmann2013statistical} had relatively good power for $\rho'=0.5$ but low power for $\rho'=0.9$.

We also performed simulations where $\gamma$ was very anti-sparse, e.g. with $\gamma_2=1$, $\gamma_3=...=\gamma_{800}=0.002$ and $\rho'=0.9$. 
We also considered negative coefficients and we varied the magnitude of the coefficients and the errors $\bm{\epsilon}$ and the sample size. We also considered more settings where there were multiple independent clusters of correlated covariates.
Also in  these settings, the type I error rate was controlled.

\begin{table}[!h] \normalsize  
\begin{center}
\caption{ Sparse setting  with a large number ($q=1000$) of nuisance variables. Here $\rho'=0.5$, $n=30$. Power is shown for $\beta= 2$. } 
    \begin{tabular}{ l  l l  l  l l    }    
\hline \\[-0.4cm]
 &  & \multicolumn{4}{l}{\qquad \qquad \quad   Method} \\ \cline{3-6} 
 &  $\alpha$ & FLH1 & FLH2 & DR   & BM    \\ \hline 
  &  0.05   \qquad \quad & .0068 & .0065 & .0145 & .0001    \\ 
  level  &  0.01  \qquad \quad & .0013 & .0011 & .0011 & .0000   \\ 
  &  0.001    \qquad \quad & .0002 & .0001 & .0000 & .0000    \\  \hline   
   &  0.05   \qquad \quad & .5577 & .5469 & .9613 & .7820    \\ 
 power  &  0.01  \qquad \quad & .5060 & .5043 & .8007 & .6510    \\ 
  &  0.001    \qquad \quad & .3752 & .4049 & .3463 & .4851     \\  \hline  
    \end{tabular}
\label{table:q1000rho05}
\end{center}
\end{table}

\begin{table}[!h] \normalsize  
\begin{center}
\caption{Sparse setting  with a large number ($q=1000$) of nuisance variables and high correlation $\rho'=0.9$. Power is shown for $\beta= 2$. } 
    \begin{tabular}{ l  l l  l  l l    }    
\hline \\[-0.4cm]
 &  & \multicolumn{4}{l}{\qquad \qquad \quad   Method} \\ \cline{3-6} 
 &  $\alpha$ & FLH1 & FLH2 & DR   & BM    \\ \hline 
  &  0.05   \qquad \quad & .0236 & .0319 & .0358 & .0006    \\ 
  level  &  0.01  \qquad \quad & .0040 & .0074 & .0057 & .0000   \\ 
  &  0.001    \qquad \quad & .0003 & .0006 & .0001 & .0000     \\  \hline   
   &  0.05   \qquad \quad & .4766 & .5317 & .7127 & .2115   \\ 
 power  &  0.01  \qquad \quad & .3106 & .4254 & .4137 & .1042    \\ 
  &  0.001    \qquad \quad & .1303 & .2500 & .1344 & .0407     \\  \hline  
    \end{tabular}
\label{table:q1000rho09}
\end{center}
\end{table}

\subsection{Violations of the Gaussian model} \label{secsimrob}

Permutation tests can be robust to violations of the standard linear model, such as non-normality and heteroscedasticity. \citep{winkler2014permutation,hemerik2020robust}
The power of parametric methods is often substantially decreased when the residuals have heavy tails. The power of the permutation tests is more robust to such deviations from normality. This is illustrated in Table \ref{table:exp3}. Here, the data distribution was the same as in the setting corresponding to Table \ref{table:hdspr09}, except that the errors $\bm{\epsilon}$ were not standard normally distributed, but had very heavy (cubed exponential) tails, scaled such that the errors had standard deviation 1. Note in Table \ref{table:exp3} that the permutation and bootstrap methods still had roughly the  same power as at Table \ref{table:hdspr09}, while the power of BM and ZZ  was strongly reduced compared to Table \ref{table:hdspr09}.

\begin{table}[!h] \normalsize  
\begin{center}
\caption{Same sparse  setting as at Table \ref{table:hdspr09} but with very heavy-tailed  errors.  } 
    \begin{tabular}{ l l   l  l l l  l l }    
\hline \\[-0.4cm]
 &  & \multicolumn{5}{l}{\qquad \qquad \qquad \qquad  Method} \\ \cline{3-8} 
 &  $\alpha$ &  FLH1 & FLH2  & DR  & BM  & ZZ & BO  \\ \hline 
  &  0.05   \qquad \quad & .0345 & .0313 & .0336 & .0034  & .0215 & .022  \\ 
  level  &  0.01  \qquad \quad & .0059 & .0051 & .0053 & .0001  & .0043  & .004  \\ 
  &  0.001    \qquad \quad & .0005 & .0002 & .0002 & .0000  & .0006   & .002  \\  \hline   
   &  0.05   \qquad \quad & .4498 & .5493 & .4593 & .2173  & .5433  & .566   \\ 
 power  &  0.01  \qquad \quad & .2295 & .3353 & .2016 & .0730 & .3173   & .390   \\ 
  &  0.001    \qquad \quad & .0780 & .1309 & .0492 & .0151   & .1374   & .215  \\  \hline  
    \end{tabular}
\label{table:exp3}
\end{center}
\end{table}

As a second type of violation of the standard linear model, we considered heteroscedasticity. We simulated  errors $\epsilon_i$ which were normally distributed, but with standard deviation proportional to the absolute value covariate of interest, $|X_i|$.  We again took $\gamma_2=\gamma_3=1$, $\gamma_4=...=\gamma_{60}=0$.    We took $\rho'=0$ for illustration, since in that case  the method ZZ by \citet{zhang2014confidence}  turned out to be very anti-conservative under heteroscedasticity. Otherwise, the simulated data were again as those used for  Table \ref{table:hdspr09}. 
The results are in Table  \ref{table:hete}.
Note that despite the heteroscedasticity, the permutation-based tests provided appropriate type I error control.  
The bootstrap approach BO of \citet{dezeure2017high} seemed to be anti-conservative for small $\alpha$.
The test BM from \citet{buhlmann2013statistical} had higher power than the permutation methods in this specific setting, but was anti-conservative for small $\alpha$.

In the simulations underlying Table \ref{table:hete}, we did not use sign-flipping, which is known to be robust to heteroscedasticity \citep{winkler2014permutation,hemerik2020robust}.  Surprisingly,  our tests nevertheless provided appropriate type I control.
We also performed these simulations with sign-flipping instead of permutation (results not shown), which further reduced the level of our tests, but also somewhat reduced the power.

\begin{table}[!h] \normalsize  
\begin{center}
\caption{Sparse setting with heteroscedastic errors, $\rho'=0$, $n=30$, $q=60$. Power is shown for $\beta= 1.5$. } 
     \begin{tabular}{ l l   l  l l l  l l }    
\hline \\[-0.4cm]
 &  & \multicolumn{5}{l}{\qquad \qquad \qquad \qquad  Method} \\ \cline{3-8} 
 &  $\alpha$ &  FLH1 & FLH2  & DR  & BM  & ZZ & BO  \\ \hline 
  &  0.05   \qquad \quad & .0352 & .0354 & .0271 & .0338   & .1490   & .077  \\ 
  level  &  0.01  \qquad \quad & .0065 & .0069 & .0050 & .0109   & .0648  & .028  \\ 
  &  0.001    \qquad \quad & .0010 & .0009 & .0008 & .0029  & .0280    & .011  \\  \hline   
   &  0.05   \qquad \quad & .7901 & .8060 & .7855 & .9403   & .9902   & .982  \\ 
 power  &  0.01  \qquad \quad & .6787 & .6861 & .6454 & .8534 & .9741  & .936     \\ 
  &  0.001    \qquad \quad & .4910 & .4909 & .4498 & .6903   & .9332   & .830   \\  \hline  
    \end{tabular}
\label{table:hete}
\end{center}
\end{table}

\subsection{Multi-dimensional parameter of interest} \label{secsimmulti}

We simulated the test  of Section \ref{multidimbeta} for multi-dimensional $\bm{\beta}$. As the combination statistic we used $\Psi(t_1,...,t_d)=\max(t_1,...,t_d)$. The parameter of interest $\bm{\beta}$ had dimension 10 and there were 490 nuisance variables, i.e., $\dim(\bm{\gamma})=491$, since 
$\gamma_1$ is the intercept. The outcome $Y$ followed a Gaussian model, as in Section \ref{secsimgaus}. We considered three simulation settings.
The nuisance parameters were $\gamma_2=3,\gamma_3=2,\gamma_4=1,\gamma_5=...=\gamma_{491}=0$ in the first two settings and $\gamma_2=....=\gamma_{101}=0.03,\gamma_{102}=...=\gamma_{491}=0$ in the third setting. The covariates had a multinormal distribution with homogeneous correlation $\rho'=0.5$ in the first setting and  $\rho'=0.9$ in the last two settings.
The results are in Table \ref{table:simmulti}. The test provided appropriate type I error control.

We conclude from the simulations of Section \ref{secsims} that our tests provide good type I error control and are rather robust to several types of model misspecification. The method ZZ from \citet{zhang2014confidence} was often relatively powerful, but was quite anti-conservative in several scenarios. 
The bootstrap approach BO of \citet{dezeure2017high} was also anti-conservative in several scenarios, but less so.
The method BM from \citet{buhlmann2013statistical} tended to be relatively conservative.

\begin{table}[!h] \normalsize  
\begin{center}
\caption{Multi-dimensional $\bm{\beta}\in\mathbb{R}^{10}$. Power is shown for $\bm{\beta}=(3,2,1,0,...,0)$. } 
    \begin{tabular}{ l  l l  l  l l    }    
\hline \\[-0.4cm]
 &  & \multicolumn{3}{l}{\qquad  \quad   Simulation setting} \\ \cline{3-5} 
 &  $\alpha$ & Setting 1& Setting 2 & Setting 3     \\ \hline 
  &  0.05   \qquad \quad & .0174 & .0197 & .0330   \\ 
  level  &  0.01  \qquad \quad & .0023 & .0024 & .0055     \\ 
  &  0.001    \qquad \quad & .0004 & .0002 & .0002        \\  \hline   
   &  0.05   \qquad \quad & .4443 & .5098 & .6286      \\ 
 power  &  0.01  \qquad \quad & .3740 & .4552 & .5731       \\ 
  &  0.001    \qquad \quad & .2503 & .3788 & .4736        \\  \hline  
    \end{tabular}
\label{table:simmulti}
\end{center}
\end{table}

\section{Data analysis} \label{secdata}

We analyze a dataset about riboflavin (vitamin B2) production with \emph{B. subtilis}. This dataset is called \emph{riboflavin} and is publicly available \citep{buhlmann2014high}. It contains normalized  measurements of expression rates of 4088 genes from $n=71$ samples. We use these as input variables. Further, for each sample the dataset contains the logarithm of the riboflavin production rate, which is our one-dimensional outcome of interest. We (further)  standardized the expression levels by subtracting the means and dividing by the standard deviations. We also shifted the outcome values to have mean zero.

For every $1\leq i \leq 4088$, we tested the hypothesis $H_i$ that the outcome was independent of the expression level of gene $i$, conditional on the other expression levels. We used the same tests as considered in the simulations. This time we used  $w=2\cdot10^5$ permutations per test.

The results of the analysis are summarized in Table \ref{table:dataan}. The columns correspond to the same methods as considered in Section \ref{secsims}.    For every method, the fraction of rejections is shown for different \emph{p}-value cut-offs $\alpha$. The fraction of rejections is the number of rejected hypotheses divided by 4088, the total number of hypotheses. The hypotheses that were rejected, were those with \emph{p}-values smaller than or equal to the cut-off $\alpha$.

With most methods we obtain many  \emph{p}-values smaller than 0.05. This is not the case for the test BM by \citet{buhlmann2013statistical}, which is known to be relatively conservative.
After Bonferroni's multiple testing correction, we reject no hypotheses with any method, suggesting there is no strong signal in the data. \citet{van2014asymptotically} also obtained such a result with this dataset.

\begin{table}[!ht] \normalsize  
\caption{Real data analysis. For different \emph{p}-value cut-offs $\alpha$, the fraction of rejected hypotheses is shown. } 
\begin{center}
    \begin{tabular}{   l  l  l l l  l l }    
\hline \\[-0.4cm]
   & \multicolumn{5}{l}{\qquad    Fraction of rejected hypotheses} \\ \cline{2-7} 
  $\alpha$  & FLH1 &  FLH2 & DR     & BM & ZZ & BO  \\ \hline 
    0.05    \qquad \quad  & .0005 & .0259 & .0428 &  0  & .0135 & .0272 \\  
     0.01   \qquad \quad  & 0 & .0071 & .0066 &  0  & .0022  & .0051 \\ 
    0.001  \qquad \quad  & 0 & .0002 & .0012 &  0  & .0007  & .0024  \\ 
    0.0001    \qquad \quad  & 0 & 0 & 0 &  0  & 0  & 0 \\  \hline  
    \end{tabular}
\label{table:dataan}
\end{center}
\end{table}

\section{Discussion}
We have proposed novel permutation  methods for testing in linear  models, where the number of nuisance variables may be much larger than the sample size.
Advantages of permutation approaches include robustness to certain violations of the standard linear model and compatibility with powerful permutation-based multiple testing methods.

We have proposed two novel permutation approaches, Freedman-Lane HD and Double Residualization. Within these approaches some variations are possible, with respect to how the regularization parameters are chosen and which test statistics are used. 
Our methods  provided excellent type I error rate control in a wide range of simulation settings. 
In particular we considered settings with anti-sparsity, high correlations among the covariates, clustered covariates, fat-tailedness of the outcome variable and heteroscedasticity.
The simulation study was limited to settings with multivariate normal covariates. Future research may address more scenarios.

We compared our methods to the parametric tests in  \citet{buhlmann2013statistical}  and \citet{zhang2014confidence} and to the bootstrap approach in \citet{dezeure2017high}. 
 One advantage of  our methods compared to those in \citet{buhlmann2013statistical} and \citet{zhang2014confidence}, is that they are defined in the case that the parameter of interest is multi-dimensional. 
Further, our tests tended to have higher power than the method by \citet{buhlmann2013statistical}. 
The test by \citet{zhang2014confidence} had relatively good power, but was rather anti-conservative in several scenarios, for example under anti-sparsity and heteroscedasticity.
The bootstrap approach of \citet{dezeure2017high} was also anti-conservative in some scenarios, but  less so. 
Our permutation tests provided appropriate type I error control in all scenarios. Moreover, our permutation tests were computationally much  faster than the bootstrap method.


\setlength{\bibsep}{3pt plus 0.3ex}  
\def\bibfont{\small}  

\bibliographystyle{biblstyle}
\bibliography{bibliography}

\newpage
\section*{Supplementary material}

We show that for fixed $q$, our Double Residualization method is asymptotically equivalent to the Kennedy method under local alternatives if the penalty is $o_{\mathbb{P}}(n^{1/2})$. That method is defined if $q<n$ and  is based  on the statistics $T_j^{K}=\rho\big( \Pe_j \re\Y,  \re\X \big)$, $1\leq j \leq w$ \citep{anderson2001permutation}. Note that the Kennedy method is also asymptotically equivalent to the Freedman-Lane method \citep{anderson2001permutation}.

\begin{proposition} \label{lambdatozero}
Let $\xi \in \mathbb{R}$ and suppose 
$\beta=\xi n^{-1/2}$.   Assume  $\lambda=\lambda_n=o_{\mathbb{P}}(n^{1/2})$ and $\lambda_X=\lambda_{X,n}=o_{\mathbb{P}}(n^{1/2})$ .
Let $G=G_n$ be the group of $n!$ permutation maps and let the $n\times n$ matrices  $\Pe_1,...,\Pe_w$ encode the random permutations as usual. 
Assume for convenience that $\Z$ contains a column of 1's.
Assume that for $1\leq i \leq n$, $\mathbb{E}|(\re\X)_i|^{3}$ and $\mathbb{E}|(\re\Y)_i|^{3}$ are finite.

Consider the Double Residualization method, which rejects $H_0$ when the p-value (2) satisfies $p\leq\alpha$, i.e.,  when the event
$$ \Big\{  w^{-1}|\{ 1\leq j \leq w: T_j \leq T_1  \}|\leq \alpha/2 \Big\} \cup \Big\{  w^{-1}|\{ 1\leq j \leq w: T_j \geq T_1  \}|\leq \alpha/2 \Big\}$$
occurs. 
This test is asymptotically equivalent with the Kennedy method, i.e., as $n\rightarrow\infty$, the difference of the rejection functions converges to 0 in probability.
In particular, as $n\rightarrow \infty$, the level of our test
converges to $2\lfloor w \alpha /2   \rfloor/w\leq \alpha$, where $2\lfloor w\alpha /2   \rfloor/w=\alpha$ if $\alpha$ is a multiple of $2/w$.
\end{proposition}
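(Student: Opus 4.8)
The plan is to show that, once $q$ is fixed, each Double Residualization statistic $T_j$ is, after multiplication by $n^{1/2}$, a common positive factor times the corresponding Kennedy statistic $T_j^{K}=\rho(\Pe_j\re\Y,\re\X)$ plus a vanishing remainder, so that the two procedures rank $T_1$ among $T_1,\dots,T_w$ identically with probability tending to $1$; since both rejection rules depend only on this ranking, their difference then tends to $0$ in probability. First I would record the elementary consequences of $q$ being fixed with i.i.d. rows: $\Z'\Z/n$ converges to a positive definite matrix, so $\re\X$, $\re\Y$, $\pror_{\lambda}\Y$ and $\rer_{\lambda_X}\X$ all have Euclidean norm of exact order $n^{1/2}$, and under the local alternative $\beta=\xi n^{-1/2}$ each numerator $\langle\Pe_j\re\Y,\re\X\rangle$ is $O_{\mathbb{P}}(n^{1/2})$ and nondegenerate.

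The first substantive step is to quantify the cost of replacing least squares by ridge. Via the resolvent identity $(\Z'\Z+\lambda\I_q)^{-1}-(\Z'\Z)^{-1}=-\lambda(\Z'\Z+\lambda\I_q)^{-1}(\Z'\Z)^{-1}$ I would write $\rer_{\lambda}\Y-\re\Y=\Z\bm{\delta}$ and $\rer_{\lambda_X}\X-\re\X=\Z\bm{\delta}_X$ and obtain $\|\Z\bm{\delta}\|=O_{\mathbb{P}}(\lambda/n^{1/2})$ and $\|\Z\bm{\delta}_X\|=O_{\mathbb{P}}(\lambda_X/n^{1/2})$; the hypotheses $\lambda,\lambda_X=o_{\mathbb{P}}(n^{1/2})$ make both $o_{\mathbb{P}}(1)$. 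Writing $A_j=\langle(\Pe_j\rer_{\lambda}+\pror_{\lambda})\Y,\rer_{\lambda_X}\X\rangle$ for the (centered) numerator of $T_j$, the fitted part $\langle\pror_{\lambda}\Y,\rer_{\lambda_X}\X\rangle$ collapses, using $\re\X\perp\Z$, to $\langle\pror_{\lambda}\Y,\Z\bm{\delta}_X\rangle=O_{\mathbb{P}}(\lambda_X)=o_{\mathbb{P}}(n^{1/2})$, and replacing $\rer_{\lambda},\rer_{\lambda_X}$ by $\re$ in the remaining term costs only $o_{\mathbb{P}}(n^{1/2})$ by Cauchy--Schwarz. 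Hence $A_j=\langle\Pe_j\re\Y,\re\X\rangle+o_{\mathbb{P}}(n^{1/2})$, the leading term being exactly Kennedy's numerator (the case $\Pe_1=\I_n$ recovers $T_1$, since $(\rer_{\lambda}+\pror_{\lambda})\Y=\Y$).

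The main obstacle is the denominator, because Double Residualization keeps $\pror_{\lambda}\Y$ in the transformed outcome whereas Kennedy does not, so the two $\Y$-norms are structurally different rather than equal. The resolution is that the difference is a $j$-independent inflation: after centering, $\|(\Pe_j\rer_{\lambda}+\pror_{\lambda})\Y\|^2=\|\rer_{\lambda}\Y\|^2+\|\pror_{\lambda}\Y\|^2+2\langle\Pe_j\rer_{\lambda}\Y,\pror_{\lambda}\Y\rangle$, where the first two terms do not depend on $j$ and are of order $n$, while the cross term has permutation mean zero and permutation variance of order $n$, hence is $O_{\mathbb{P}}(n^{1/2})$. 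Thus every $\Y$-denominator equals a common $B_\star$ times $1+O_{\mathbb{P}}(n^{-1/2})$, and since the $\rer_{\lambda_X}\X$-factor is shared by all $j$, I obtain $n^{1/2}T_j=U_j\,b^{-1}(1+o_{\mathbb{P}}(1))$ with a strictly positive $b$ common to all $j$, where $(U_1,\dots,U_w)$ denotes the joint limiting law of $n^{-1/2}\langle\Pe_j\re\Y,\re\X\rangle$. The Kennedy statistics obey $n^{1/2}T_j^{K}\to U_j\,(b')^{-1}$ with the same $U_j$ and a common positive $b'$, because permutation preserves $\|\re\Y\|$ so all Kennedy denominators are literally equal.

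Finally I would combine these expansions. By the combinatorial central limit theorem (Hoeffding; the assumed finite third moments of $(\re\X)_i$ and $(\re\Y)_i$ supply the Lindeberg condition) the vector $(U_1,\dots,U_w)$ has a continuous joint law, so its pairwise differences are almost surely nonzero. Since $n^{1/2}T_j$ and $n^{1/2}T_j^{K}$ both converge jointly to $U_j$ up to a common positive scaling, the orderings of $(T_1,\dots,T_w)$ and $(T_1^{K},\dots,T_w^{K})$ coincide with probability tending to $1$, whence the two rejection indicators differ with probability tending to $0$; this is the claimed asymptotic equivalence. The level statement is the special case $\xi=0$: there $(U_1,\dots,U_w)$ is exchangeable, so the rank of $T_1$ is asymptotically uniform on $\{1,\dots,w\}$ and the two-sided rule rejects with limiting probability $2\lfloor w\alpha/2\rfloor/w$, which equals $\alpha$ precisely when $\alpha$ is a multiple of $2/w$.
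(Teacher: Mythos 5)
Your proposal is correct and follows essentially the same route as the paper's proof: you control the ridge-versus-least-squares perturbation through the identity $\hat{\bm{\gamma}}_r - \hat{\bm{\gamma}} = -\lambda(\Z'\Z+\lambda \I_q)^{-1}\hat{\bm{\gamma}} = o_{\mathbb{P}}(n^{-1/2})$ (the paper's Hoerl--Kennard step, your resolvent identity), eliminate the fitted part $\pror_{\lambda}\Y$ via orthogonality to $\re\X$, and obtain the level from the asymptotically i.i.d.\ continuous limit of $(\sqrt{n}T_1,\dots,\sqrt{n}T_w)$, exactly as the paper does (it cites Anderson and Robinson for the permutation CLT where you cite Hoeffding). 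The differences are presentational rather than substantive: you merge the paper's two-step comparison ($T_j \approx T_j^{OLS} \approx T_j^{K}$) into a single numerator/denominator decomposition, and you make explicit two points the paper leaves terse, namely that the $j$-dependence of the Double Residualization denominators is only a relative $O_{\mathbb{P}}(n^{-1/2})$ perturbation coming from the permutation cross term, and that joint convergence up to a common positive scale factor forces the rankings, hence the rejection indicators, of the two tests to coincide with probability tending to one.
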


\begin{proof}
Suppose that $n>q$ and $H_0$ holds.
Let $\hat{\bm{\gamma}}_r^n$ and $\hat{\bm{\gamma}}^n$ be the ridge and least squares estimates  $ (\Z'\Z+\lambda \bm{I}_q)^{-1}\Z'\Y$ and $(\Z'\Z)^{-1}\Z'\Y$ respectively, the latter of which exists with probability 1.
By equations (2.4) and (2.7) in \citet{hoerl1970ridge}, 
$$ \hat{\bm{\gamma}}_r^n = \big[\bm{I}_q  -\lambda_n (\Z'\Z+\lambda_n \bm{I}_q)^{-1}\big]\hat{\bm{\gamma}}^n,$$
so that
$$\hat{\bm{\gamma}}_r^n - \hat{\bm{\gamma}}^n= -\lambda_n (\Z'\Z+\lambda_n \bm{I}_q)^{-1}\hat{\bm{\gamma}}^n =  o_{\mathbb{P}}(n^{1/2}n^{-1})=o_{\mathbb{P}}(n^{-1/2}).$$

Let  $1\leq j \leq w$ and $T_j^{OLS}=\rho\big(( \Pe_j \re + \pro)     \Y,\re\X\big)$. This equals $T_j$  if $\lambda=0.$
As $n\rightarrow\infty$, the product of the sample standard deviations of    $ \Pe_j \rer_{\lambda}\Y   + \pror_{\lambda}\Y      $  and $\rer_{\lambda_X}\X$  converges to a constant $c$, say.
Thus
\begin{align*}
\sqrt{n}T_j=&\sqrt{n}n^{-1}( \Pe_j \rer_{\lambda}\Y+\pror_{\lambda}\Y     -\bm{\mu}_y  )'(  \rer_{\lambda_X}\X-\bm{\mu}_2 )/c + o_{\mathbb{P}}(1),\\
\sqrt{n}T_j^{OLS}=&\sqrt{n}n^{-1}( \Pe_j \re\Y + \pro\Y -\bm{\mu}_y   )'  \re\X/c + o_{\mathbb{P}}(1),
\end{align*}
where $\bm{\mu}_y$ and $\bm{\mu}_2$ denote the $n$-vectors  with entries equal to the sample means of $ \Y$ and $\rer_{\lambda_X}\X$  respectively.

Note that the entries of 
$$ (  \Pe_j \rer_{\lambda}\Y + \pror_{\lambda}\Y  -\bm{\mu}_y ) - ( \Pe_j \re\Y  +  \pro\Y   -\bm{\mu}_y  ) = -\Pe_j\Z(\hat{\bm{\gamma}}_r^n-\hat{\bm{\gamma}}^n) + \Z(\hat{\bm{\gamma}}_r^n- \hat{\bm{\gamma}}^n) 
$$
are $o_{\mathbb{P}}(n^{-1/2})$ and likewise the entries of  $( \rer_{\lambda_X}\X-\bm{\mu}_2 )- \re \X $.
It follows that 
$$\sqrt{n}T_j-\sqrt{n}T_j^{OLS} = \sqrt{n} n^{-1} o_{\mathbb{P}}(n n^{-1/2})=o_{\mathbb{P}}(1).$$

The product of the sample standard deviations of    $ \Pe_j \re\Y $  and $\re\X$  converges to a constant $c'$, say.
Note that

$$  \sqrt{n}T_j^{OLS}c = \sqrt{n} T_j^{K} c' + \sqrt{n}n^{-1}(\pro\Y  -\bm{\mu}_y)' \re\X +o_{\mathbb{P}}(1)=\sqrt{n} T_j^{K} c' + o_{\mathbb{P}}(1),$$

 since  $(\pro\Y)' \re\X =0$ and $\bm{\mu}_y' \re\X =0$.

Hence the two tests are asymptotically equivalent.

Under $\xi=0$, the vector $(\sqrt{n}T_1^{K},...,\sqrt{n}T_w^{K})$ is known to have an asymptotic $N(\bm{0},\bm{I}_w)$ distribution \citep{anderson2001permutation}. It follows that $\sqrt{n}T_1,...,\sqrt{n}T_w$ are asymptotically normal and i.i.d..
By the basic Monte Carlo testing principle, if continuous statistics $T_1',...,T_w'$ are i.i.d. under the null hypothesis, then plugging these statistics into the  \emph{p}-value  formulas in Section 2.1 gives \emph{p}-values which are exact. In case the one-sided \emph{p}-value  is used, this means that  $\mathbb{P}(p\leq c)= c$ when $c\in(0,1)$ is a multiple of $w^{-1}$.
In case the two-sided \emph{p}-value  is used,  then $\mathbb{P}(p\leq c)= c$ when $c\in(0,1)$ is a multiple of $2w^{-1}$.
With the continuous mapping theorem \citep{van1998asymptotic} it follows that plugging  $T_1$,...,$T_w$ into the \emph{p}-value  formulas in in Section 2.1 gives \emph{p}-values 
which are asymptotically exact.  
Thus the probabilities
$\mathbb{P}\big(  w^{-1}|\{ j: T_j \leq T_1  \}|\leq \alpha/2 \big)$ and $\mathbb{P}\big( w^{-1}|\{ j: T_j \geq T_1  \}|\leq \alpha/2 \big)$ both converge to
$\lfloor w\alpha /2   \rfloor/w$.
\end{proof}

\bigskip\bigskip\bigskip
In Section 3.1, we refer to the proposition below.
Let $2\leq j <k \leq w$. 
We will write e.g. $cor(\Y,\Y^{*j})$ for the \emph{true} correlation of the entries of $\Y$ and $\Y^{*j}$, i.e., the true correlation of $\Y_i$ and $\Y^{*j}_i$, which is the same for every $1\leq i \leq n$.
Similarly we denote true covariances and variances using $cov$ and $var$.

\begin{proposition} \label{corlarger}
Let $\lambda>0$ freely depend on the data. 
Assume the entries of   $\pror_{\lambda} \Y$ have expected value $0$. 
 Let $2\leq j <k \leq w$. Then $cor(\Y, \Y^{*j})> cor(\Y^{*j}, \Y^{*k})$. 
\end{proposition}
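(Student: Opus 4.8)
The plan is to reduce the two coordinatewise correlations to ratios of an expected inner product over expected squared norms, evaluate every ingredient by averaging over the permutation matrices, and then collapse the whole claim to a single positivity statement about the ridge fit and the ridge residual.

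First I would write $\U=\pror_{\lambda}\Y$ and $\V=\rer_{\lambda}\Y$, so that $\Y=\U+\V$ and $\Y^{*j}=\Pe_j\V+\U$. The hypothesis that the entries of $\pror_{\lambda}\Y$ have mean zero, together with $\mathbb{E}\Y=\bm 0$, gives that $\U$, $\V$, $\Y$ and every $\Y^{*j}$ have mean-zero entries. Because the observations are i.i.d. and the ridge operators are equivariant under a common relabelling (with $\lambda$ chosen symmetrically), the joint law of $(\Y_i,\Y^{*j}_i)$ is the same for all $i$, which is exactly what makes $cor(\Y,\Y^{*j})$ well defined. Each per-coordinate covariance is then $n^{-1}$ times the corresponding expected inner product and each variance is $n^{-1}$ times an expected squared norm, so that
\[
cor(\Y,\Y^{*j})=\frac{\mathbb{E}[\Y'\Y^{*j}]}{\sqrt{\mathbb{E}\Vert\Y\Vert^2\,\mathbb{E}\Vert\Y^{*j}\Vert^2}},\qquad cor(\Y^{*j},\Y^{*k})=\frac{\mathbb{E}[(\Y^{*j})'\Y^{*k}]}{\mathbb{E}\Vert\Y^{*j}\Vert^2},
\]
where the last denominator uses $\mathbb{E}\Vert\Y^{*j}\Vert^2=\mathbb{E}\Vert\Y^{*k}\Vert^2$ by symmetry.

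Next I would compute the four scalars using $\mathbb{E}[\Pe_j]=n^{-1}\bm 1\bm 1'$ (with $\bm 1$ the all-ones vector), the independence of $\Pe_j$ and $\Pe_k$, and $\Vert\Pe_j\V\Vert=\Vert\V\Vert$. Setting $a=\mathbb{E}\Vert\U\Vert^2$, $b=\mathbb{E}\Vert\V\Vert^2$, $A=\mathbb{E}[(\Pe_j\V)'\U]$, $B=\mathbb{E}[\V'\Pe_j\V]$ and $C=\mathbb{E}[\U'\V]$, one obtains $N_1:=\mathbb{E}[\Y'\Y^{*j}]=a+A+B+C$, $N_2:=\mathbb{E}[(\Y^{*j})'\Y^{*k}]=a+2A+B$, $D_0:=\mathbb{E}\Vert\Y\Vert^2=a+2C+b$ and $D_1:=\mathbb{E}\Vert\Y^{*j}\Vert^2=a+2A+b$. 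The key bookkeeping is that $N_1-N_2=\tfrac12(D_0-D_1)=\delta$, where
\[
\delta:=C-A=\mathbb{E}\Big[\textstyle\sum_i(\U_i-\bar{\U})(\V_i-\bar{\V})\Big]=\mathbb{E}[\Y'\pror_{\lambda}\,M\,\rer_{\lambda}\Y],\qquad M:=\I-n^{-1}\bm 1\bm 1',
\]
is $n$ times the expected empirical covariance between the ridge fit and the ridge residual. Conditioning on $\U,\V$ and using $\mathbb{E}[\Y^{*j}\mid\U,\V]=\U+\bar{\V}\,\bm 1$ also yields the two nonnegativity facts $N_2=\mathbb{E}\Vert\U+\bar{\V}\,\bm 1\Vert^2\ge 0$ and $D_1-N_2=\mathbb{E}\Vert\V-\bar{\V}\,\bm 1\Vert^2\ge 0$.

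Substituting $N_1=N_2+\delta$ and $D_0=D_1+2\delta$ into the target $cor(\Y,\Y^{*j})>cor(\Y^{*j},\Y^{*k})$, equivalently $N_1\sqrt{D_1}>N_2\sqrt{D_0}$, and squaring (both sides being nonnegative once $\delta>0$), the claim collapses to
\[
\delta\big(2N_2(D_1-N_2)+\delta D_1\big)>0,
\]
which holds because $N_2\ge 0$, $D_1-N_2\ge 0$, $D_1>0$ and $\delta>0$. So everything rests on the final inequality $\delta>0$, and this is the step I expect to be the real obstacle. Writing $S:=\pror_{\lambda}(\I-\pror_{\lambda})$ and conditioning on $\Z$, the noise part of $\delta$ is $\sigma^2\operatorname{tr}(SM)=\sigma^2\operatorname{tr}(MSM)$; since ridge regression with $\lambda>0$ pushes every eigenvalue of $\pror_{\lambda}$ strictly inside $(0,1)$, the matrix $S$ is symmetric positive semidefinite and nonzero, and $\operatorname{tr}(MSM)>0$ because $\mathrm{range}(M)=\bm 1^{\perp}$ is not contained in the null space of $S$ (which equals $\mathrm{col}(\Z)^{\perp}$) as soon as $q\ge 2$. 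This is precisely the mechanism behind the conservativeness flagged in Section~\ref{secflhd}: for $\lambda=0$ one has $S=\bm 0$, hence $\delta=0$ and $cor(\Y,\Y^{*j})=cor(\Y^{*j},\Y^{*k})$, whereas any positive penalty makes $\delta>0$ and the inequality strict. The genuine difficulty is to control the remaining, signal-driven contribution to $\delta$ rigorously, since $\pror_{\lambda}$ depends on $\Y$ and $\lambda$ is data-driven; the cleanest route is to condition on $\Z$, treat the mean-zero homoscedastic noise singled out by the assumption $\mathbb{E}[\pror_{\lambda}\Y]=\bm 0$ as the source of randomness, and take expectations.
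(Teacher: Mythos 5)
Your algebra is exact and, in fact, more careful than the paper's own bookkeeping: the identities $N_1-N_2=\tfrac12(D_0-D_1)=\delta$, the conditioning facts $N_2\ge 0$ and $D_1-N_2\ge 0$, and the squaring step are all correct, and they reduce the proposition to $\delta>0$. But that is precisely where your proof stops, as you yourself admit. The trace argument for the ``noise part'' presupposes that $\pror_{\lambda}$ is a fixed matrix given $\Z$ and that the only randomness in $\Y$ is homoscedastic noise; it says nothing about the signal contribution $(\Z\bm{\gamma})'\pror_{\lambda}M\rer_{\lambda}(\Z\bm{\gamma})$, which has no definite sign, and it is invalidated by the data-driven $\lambda$ (and hence data-driven $\pror_{\lambda}$). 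Since the positivity of the fit--residual covariance is the entire content of the proposition — everything else is rearrangement — this is a genuine gap, not a technicality.

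The idea you are missing is that the paper obtains this positivity \emph{pointwise}, not in expectation. Writing the SVD $\Z=\U\D\V'$, one gets $\pror_{\lambda}=\U\D(\D'\D+\lambda)^{-1}\D'\U'$, so every eigenvalue of $\pror_{\lambda}$ lies in $(0,1)$ for \emph{each realization} of $(\Z,\lambda)$; hence $\pror_{\lambda}\rer_{\lambda}=\pror_{\lambda}-\pror_{\lambda}^2$ is positive definite as a matrix, no matter how $\lambda$ was chosen from the data. Consequently $\Y'(\pror_{\lambda}-\pror_{\lambda}^2)\Y>0$ almost surely, and taking expectations gives the needed covariance inequality with no signal/noise decomposition and no conditioning at all — the two difficulties you flag simply never arise. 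Note, however, a structural mismatch between the two proofs: this pointwise argument applies to the \emph{uncentered} quantity $\mathbb{E}\big[(\pror_{\lambda}\Y)'\rer_{\lambda}\Y\big]$, which is the paper's crux because the paper sets your cross terms $A$ and $B$ to zero (on the grounds that $\Pe_j\rer_{\lambda}\Y$ is a random permutation of $\rer_{\lambda}\Y$ — an identity that is exact precisely when the residuals sum to zero, i.e.\ $\bm{1}'\rer_{\lambda}\Y=0$). Your exact bookkeeping instead yields $\delta=\mathbb{E}\big[(\pror_{\lambda}\Y)'M\rer_{\lambda}\Y\big]$ with the centering matrix $M$ inside, and $\pror_{\lambda}M\rer_{\lambda}$ is \emph{not} pointwise positive semidefinite, so the SVD trick does not close your version directly: you would additionally have to control the sample-mean cross term $n^{-1}\mathbb{E}\big[(\bm{1}'\pror_{\lambda}\Y)(\bm{1}'\rer_{\lambda}\Y)\big]$, or adopt the paper's bookkeeping in which it is absent. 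So either import the paper's pointwise positive-definiteness argument together with its (uncentered) covariance decomposition, or prove the extra bound — as it stands, the chain of implications is incomplete at its only load-bearing link.
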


\begin{proof}
Let $\U\D\V'$ be the singular value decomposition of $\Z$. Here $\D$ is an $n\times q$ pseudo-diagonal matrix.  Its diagonal entries are nonzero, since $\Z$ has full rank (with probability 1).
Then $\pror_{\lambda}$ equals
\begin{align*}
\Z\big(&\Z'\Z+\lambda\big)^{-1}\Z'=\\
 \U\D\V'\big(&\V\D'\U'\U\D\V'+\lambda\big)^{-1}\V\D'\U'=\\
 \U\D\V'\big(&\V(\D'\D+\lambda)\V'\big)^{-1}\V\D'\U'.
\end{align*}
Using $\B^{-1}\A^{-1}=(\A\B)^{-1}$  twice  shows that the above equals
\begin{align*}
 \U\D\V'\V\big(&\D'\D+\lambda\big)^{-1}\V'\V\D'\U'   =  \\
\U\D\big(&\D'\D+\lambda\big)^{-1}\D'\U'.  
\end{align*}
Hence the diagonal matrix $\D(\D'\D+\lambda)^{-1}\D$ contains the singular values of $\pror_{\lambda}$, i.e., the eigenvalues. Note that these lie in $(0,1)$.
  
Thus  $\pror_{\lambda}^2$, which has the same eigenvectors as $\pror_{\lambda}$, has strictly smaller sorted eigenvalues. Consequently $\pror_{\lambda} - \pror_{\lambda}^2$ is positive definite.
Since the entries of $\Y$ and $\pror_{\lambda}\Y$ have expected value 0, so do the entries of $\rer_{\lambda}\Y$.
We have
$$cov(\rer_{\lambda}\Y,\pror_{\lambda}\Y) =       \mathbb{E}n^{-1} (\rer_{\lambda}\Y)'  \pror_{\lambda} \Y=$$
\begin{equation}  \label{covRYHYg0}
      \mathbb{E} n^{-1}\Y' \rer_{\lambda}    \pror_{\lambda} \Y =  \mathbb{E} n^{-1} \Y' (  \pror_{\lambda} - \pror_{\lambda}^2 )\Y > 0,
\end{equation} 
since $\pror_{\lambda} - \pror_{\lambda}^2$ is positive definite. 
We then also have
\begin{equation} \label{covYHY}
cov(\Y, \pror_{\lambda}\Y)= cov(\rer_{\lambda}\Y,\pror_{\lambda}\Y)+ cov(\pror_{\lambda}\Y,\pror_{\lambda}\Y)> 0.
\end{equation}

Note that 
\begin{equation} \label{covYYj}
cov(\Y, \Y^{*j})=   cov(\Y, \pror_{\lambda}\Y)  + cov(\Y, \Pe_j\rer_{\lambda}\Y)=cov(\Y, \pror_{\lambda}\Y),
\end{equation}
 since $\Pe_j\rer_{\lambda}\Y$ is a random permutation of $\rer_{\lambda}\Y$.
Similarly we have 
$$var(\Y^{*j})  =  var(\pror_{\lambda}\Y) +2 cov(\pror_{\lambda}\Y, \Pe_j\rer_{\lambda}\Y)  +  var( \Pe_j \rer_{\lambda}\Y)=$$
\begin{equation} \label{varYj}
 var(\pror_{\lambda}\Y) + 0 +  var( \rer_{\lambda}\Y)
\end{equation}
and 
\begin{equation} \label{covYkYj}
cov(\Y^{*k},\Y^{*j}) = var(\pror_{\lambda}\Y).
\end{equation}
By \eqref{covYYj} and \eqref{varYj},
\begin{equation} \label{corYYj}
cor(\Y, \Y^{*j})= \frac{ cov(\Y,\Y^{*j})  }{  \sqrt{var(\Y)var(\Y^{*j})}    } = \frac{  cov(\Y, \pror_{\lambda}\Y) }{\sqrt{ var(\Y) \big(  var(\pror_{\lambda}\Y)+ var( \rer_{\lambda}\Y  )    \big)             }}.
\end{equation}
By  \eqref{varYj} and \eqref{covYkYj},
$$  cor(\Y^{*k}, \Y^{*j}) = \frac{ cov(\Y^{*k},\Y^{*j})  }{  \sqrt{var(\Y^{*k})var(\Y^{*j})}    }  = \frac{ var(\pror_{\lambda}\Y) }{ var(\pror_{\lambda}\Y)+ var(\rer_{\lambda}\Y)   },   $$
so that $cor(\Y^{*k}, \Y^{*j}) =C  \cdot cor(\Y, \Y^{*j}) $, where
$$ C=   \frac{  var(\pror_{\lambda}\Y)   \sqrt{ var(\Y)}    }{  cov(\Y, \pror_{\lambda}\Y)  \sqrt{  var(\pror_{\lambda}\Y)+ var( \rer_{\lambda}\Y  )                }   } .   $$
Here, $cor(\Y, \Y^{*j})> 0$ by \eqref{covYHY} and \eqref{corYYj}.

We are done if we show that $C<1$. let
\begin{align*}
a&=  var(\pror_{\lambda}\Y)>0,\\ 
b&= var(\rer_{\lambda}\Y)>0   ,\\
c&= cov(\rer_{\lambda}\Y,\pror_{\lambda}\Y ) > 0,\\
\end{align*}
where $c>0$ due to \eqref{covRYHYg0}. Note that 
$$var(\Y)= var(\pror_{\lambda}\Y+\rer_{\lambda}\Y) =a+b+2c,$$
$$cov(\Y, \pror_{\lambda}\Y)= var(\pror_{\lambda}\Y) +cov(\rer_{\lambda}\Y,\pror_{\lambda}\Y )=a+c, $$
so that  $$C=\frac{a\sqrt{a+b+2c}}{(a+c)\sqrt{a+b}}.$$ Fix $a>0$ and $b>0$. For $c\geq 0$, write  $f_1(c)=a\sqrt{a+b+2c}$ and $f_2(c) = (a+c)\sqrt{a+b}$. Note that $f_1(0)= f_2(0)$ and   $$f_1'(c)= a(a+b+2c)^{-1/2}<\sqrt{a}< \sqrt{a+b}=f_2'(c).$$
Thus, for $c>0$,  $$C= \frac{f_1(c)}{f_2(c)}=\frac{f_1(0)+ \int_{0}^{c} f_1'(\zeta)d\zeta }{f_2(0)+ \int_{0}^{c} f_2'(\zeta)d\zeta     }<1. $$
\end{proof}

Note that if  we have  $n>q$ and $\lambda=0$, then $cor(\Y, \Y^{*j})= cor(\Y^{*j}, \Y^{*k})$. Indeed, then $c=0$ in the above proof, so that $C=1$.

\end{document}